\newif\ifarxiv 
\newif\ifCLASSOPTIONcaptionsoff
\newtheorem{theorem}{Theorem}
\newtheorem{proposition}{Proposition}
\theoremstyle{definition}
\newtheorem{example}{Example}
\newtheorem{definition}{Definition}
\newtheorem{assumption}{Assumption}
\newcommand{\PP}{\mathcal{P}}
\theoremstyle{remark}
\newtheorem{remark}{Remark}
\begin{document}
%
\title{How Will the Presence of Autonomous Vehicles Affect the Equilibrium State of Traffic Networks?}
%
%
%

\ifarxiv
\author{Negar Mehr and Roberto Horowitz\footnote{N. Mehr  and R. Horowitz are with the Department
		of Mechanical Engineering, University of California, Berkeley, Berkeley,
		CA, 94720 USA e-mails: negar.mehr@berkeley.edu, horowitz@berkeley.edu }}
\else
\author{Negar Mehr,~\IEEEmembership{Student Member,~IEEE,}
        and~Roberto~Horowitz,~\IEEEmembership{Senior~Member,~IEEE}
\thanks{N. Mehr and R. Horowitz are with the Department
of Mechanical Engineering, University of California, Berkeley, Berkeley,
CA, 94706 USA e-mails: {\tt\small negar.mehr@berkeley.edu}, {\tt\small horowitz@berkeley.edu} }
}
\fi
%
%

\ifarxiv
\else
%
\fi



\maketitle

\begin{abstract}
It is known that connected and autonomous vehicles are capable of maintaining shorter headways and distances when they form platoons of vehicles. Thus, such technologies can result in increases in the capacities of traffic networks. Consequently, it is envisioned that their deployment will boost the network mobility. In this paper, we verify the validity of this impact under selfish routing behavior of drivers in traffic networks with mixed autonomy, i.e. traffic networks with both regular and autonomous vehicles. We consider a nonatomic routing game on a network with inelastic (fixed) demands for the set of network O/D pairs, and study how replacing a fraction of regular vehicles by autonomous vehicles will affect the mobility of the network. Using the well known US bureau of public roads (BPR) traffic delay models, we show that the resulting Wardrop equilibrium is not necessarily unique even in its weak sense for networks with mixed autonomy.
We state the conditions under which the total network delay is guaranteed not to increase as a result of autonomy increase. However, we show that when these conditions do not hold, counter intuitive behaviors may occur: the total delay can grow by increasing the network autonomy. In particular, 
we prove that for networks with a single O/D pair, if the road degrees of asymmetry are homogeneous, the total delay is 1) unique, and 2) a nonincreasing continuous function of network autonomy fraction. We show that for heterogeneous degrees of asymmetry, the total delay is not unique, and it can further grow with autonomy increase. We demonstrate  that similar behaviors may be observed in networks with multiple O/D pairs. We further bound such performance degradations due to the introduction of autonomy in homogeneous networks. 
\ifarxiv

\textbf{Keywords:} autonomous vehicles, Wardrop equilibrium, game theory, Braess's paradox, routing games, traffic networks.
\fi
\end{abstract}

\ifarxiv
\else
\begin{IEEEkeywords}
autonomous vehicles, Wardrop equilibrium, game theory, Braess's paradox, routing games, traffic networks.
\end{IEEEkeywords}
\fi

%
\ifarxiv
\else
\IEEEpeerreviewmaketitle
\fi

\section{Introduction}\label{intro}
Connected and autonomous vehicles technology have attracted significant attention as a result of their potentials for increasing vehicular safety and drivers' comfort. Connected technologies can be used to inform drivers about the existing hazards through vehicle to vehicle (V2V) or vehicle to infrastructure (V2I) communication. Aligned with these safety considerations, automobile companies have started to equip vehicles with autonomous capabilities. In fact, some of these capabilities, such as driver assistive technologies and adaptive cruise control (ACC) have already been deployed in vehicles. 

The impact of these technologies is not limited to vehicles safety. Connected and autonomous vehicles technology can facilitate vehicle \emph{platooning}. Vehicle platoons are groups of more than one vehicle, capable of maintaining shorter headways; thus, platooning can lead to increases in the capacities of network links~\cite{lioris2017platoons}. Such increases can be up to three--fold~\cite{lioris2017platoons} if all the vehicles are autonomous and connected. 
In addition to mobility benefits, platooning can have sustainability benefits, it can also reduce energy consumption for heavy duty vehicles~\cite{al2010experimental,liang2013fuel,alam2015heavy}.

The mobility benefits of platooning and autonomous capabilities of vehicles are not limited to increasing network capacities. There has been a focus on how to utilize vehicle autonomy and connectedness to remove signal lights from intersections and 
coordinate conflicting movements such that the network throughput is improved~\cite{zhang2016optimal,tallapragada2015coordinated,miculescu2014polling,fayazi2018mixed}. However, in order for such approaches to be implemented, all vehicles in the network need to have autonomous capabilities. To reach the point where all vehicles are autonomous, transportation networks need to face a \emph{transient} era, when both regular and autonomous vehicles coexist in the networks. Therefore, it is crucial to study networks with mixed autonomy.

In~\cite{askari2016measuring}, the performance of traffic networks with mixed autonomy was studied via simulations. Moreover, it was shown in multiple works that in networks with mixed autonomy, autonomous vehicles can be utilized to stabilize the low--level dynamics of traffic networks and damp congestion shockwaves~\cite{wu2017emergent, darbha1999intelligent,yi2006macroscopic,pueboobpaphan2010driver,stern2018dissipation}. In~\cite{mehr2018game1,mehr2018game2} altruistic lane choice of autonomous vehicles was studied.  In~\cite{lazar2017capacity}, the capacity of network links was modeled in a traffic setting with mixed autonomy. This modeling framework was further used in~\cite{lazar2017price} to calculate the price of anarchy of traffic networks with mixed autonomy, where the price of anarchy is an indicator of how far the equilibrium of networks with mixed autonomy is from their social optimum that could have been achieved if a social planner had routed all the vehicles. In~\cite{lazar2018maximizing}, it was shown that local actions of the autonomous vehicles on the road
can lead to optimal vehicle orderings for the global network properties such as link capacities.

It is well known that due to the selfish route choice behavior of drivers, traffic networks normally operate in an equilibrium state, where no vehicle can decrease its trip time by unilaterally changing its route~\cite{smith1979existence}. 
In this paper, we wish to study how the introduction of autonomous vehicles in the network will affect the equilibrium state of traffic networks compared to the case when all vehicles are nonautonomous. We extend our initial results presented in~\cite{mehr2018can}. In particular, given a fixed demand of vehicles, we study how replacing a fraction of regular vehicles by autonomous vehicles will affect the equilibrium state of traffic networks. We study the system behavior when both regular and autonomous vehicles select their routes \emph{selfishly} to investigate the necessity of centrally enforcing autonomous vehicles routing by a network manager. We state the conditions under which increasing network autonomy fraction is guaranteed to reduce the overall network delay. Moreover, we show that when these conditions do not hold, counter intuitive and undesirable behaviors might occur, such as the case when increasing the portion of autonomous vehicles in the network can \emph{increase} the overall network delay. Such behaviors are similar to Braess' paradox, where the construction of a new road or expanding link capacities may increase total network delay. 

We model the network in a macroscopic framework where vehicle route choices are taken into account. We model the selfish route choice behavior of the drivers as a nonatomic routing game~\cite{roughgarden2002selfish} where drivers choose their routes selfishly until a Wardrop Equilibrium is achieved~\cite{wardrop1952some}. We represent a traffic network by a directed graph with a certain set of origin destination (O/D) pairs. For each O/D pair, we consider two classes of vehicles, regular and autonomous. For a given fixed demand profile along O/D pairs, we study how increasing the autonomy fraction of O/D pairs will affect the total delay of the network at equilibrium. 

We first show that the equilibrium may \emph{not} be unique even in the weak sense of total link utilization. 
Then, we study networks with a single O/D pair and prove that if the degrees of road capacity asymmetry are homogeneous in the network, the social or total delay of the network is unique, and further it is a monotone nonincreasing function of the network autonomy ratio. However, in networks with heterogeneous degrees of road asymmetry, we first
show that the social delay is not unique. Then, we demonstrate that, surprisingly, increasing the autonomy ratio of the network may lead to an \emph{increase} in the overall network delay. This is a counter intuitive behavior as we might expect that having more autonomous vehicles in the network will always be beneficial in terms of total network delay. For the networks with multiple O/D pairs, we show that similar complicated behaviors may occur, namely increasing autonomy fraction of an O/D pair might worsen the social delay of the network. Our work in fact shows that traffic paradoxes similar to the well known Braess's Paradox~\cite{braess1968paradoxon} can occur due to capacity increases provided by autonomous vehicles. We further bound such performance degradations that can arise from the presence of autonomous vehicles.

The organization of this paper is as follows. In Section~\ref{sec:model}, we describe our notation and model. We review the prior relevant results in Section~\ref{sec:uniq}. Then, in Section~\ref{sec:uniq_mixed}, we study the uniqueness of equilibrium in our routing setting. Next, in Section~\ref{sec:single_OD}, we analyze mixed autonomy networks with a single O/D pair in Section~\ref{sec:single_OD}. Subsequently, we study mixed--autonomy networks with multiple O/D pairs in Section~\ref{sec:mult}. Finally, we conclude the paper and provide relevant future directions in Section~\ref{sec:future}.







\section{Nonatomic Selfish Routing} \label{sec:model}

We model a traffic network by a directed graph $G = (N,L,W)$, where $N$ and $L$
are respectively the set of nodes and links in the network.
Each link $l \in L$ in the network is a pair of distinct nodes $(v,w)$ and represents a directed edge from $v$ towards $w$.
We assume that each link joins two distinct nodes; thus, no self loops are allowed.
Define $W = \{(o_1,d_1), (o_2,d_2),\cdots, (o_k,d_k)\}$ to be the set of origin destination (O/D) vertex pairs of the network. 
 A node $n \in N$ can appear in multiple O/D pairs. In a nonatomic selfish routing game,
if each O/D pair has a fixed given
  nonzero demand, then it is called a nonatomic selfish routing game with inelastic demands.
Each O/D pair consists of infinitesimally small agents where every agent decides on each path such that their own delay is minimized.
The delay of each
path depends on how network paths are shared among different O/D pairs. For each O/D pair $w = (o_i,d_i), 1\leq i \leq k$, we let $\PP_w$ denote the set of all possible network paths from $o_i$ to $d_i$. We assume that the network topology is such that for each O/D pair $w \in W$, there exists at least one path from its origin to its destination, i.e. $\PP_w \neq \emptyset$. We further let $\PP = \cup_{w\in W} \PP_w$ denote the set of all network paths.

For an O/D pair $w \in W$, let $r_w$ be the given fixed demand of vehicles
associated with $w$. Furthermore, for a path $p \in \PP_w$, let $f_p$ be the
flow of the O/D pair $w$ along path $p$. Note that each path connects exactly one origin to one and only one destination; thereby, once a path is fixed, its origin and destination are uniquely determined. Consequently, there is no need to explicitly include path O/D pairs in the notation used for $f_p$. It is important to note that in our setting, 
each O/D pair $w$ has two classes of vehicles: autonomous and regular.
Consequently, for each $w \in W$, we define $\alpha_w$ to be the fraction of vehicles in $r_w$ that are autonomous. We let $r = {(r_w: {w \in W})}$ and $\alpha = (\alpha_w:{w \in W})$ be the vectors of network demand and autonomy fraction respectively. Also, for each path $p \in \mathcal{P}_w$, we use $f^r_p$ and $f^a_p$ to respectively denote the flow of regular and autonomous vehicles along path $p$. Note that for each path $p \in \PP$, we have $f_p = f_p^r + f_p^a$. Moreover, for each O/D pair $w \in W$, due to flow conservation, we must have $\sum_{p \in \PP_w} f^r_p = r_w (1-\alpha_w) $, and $\sum_{p \in \PP_w  } f^a_p = r_w \alpha_w$. The network flow vector $f$ is a nonnegative vector of regular and autonomous flows along network paths, i.e. $f = (f_p^r,f_p^a: {p\in \PP})$. A  flow vector $f$ is called feasible for a given network $G$, if for each $w \in W$,

\begin{subequations}\label{eq:feasibility}
\begin{equation}
  \sum_{p \in \PP_w} f_p^r= (1-\alpha_w)r_w,\; \text{and} \; \sum_{p \in \PP_w} f_p^a= \alpha_w r_w,
\end{equation}    
\begin{equation}
   f^r_p \geq 0,\; \text{and}\; f^a_p \geq 0,\; \forall p \in \PP_w.
\end{equation}
\end{subequations}


For each link $l \in L$, $f_l$ is the total flow of vehicles in link $l$, i.e.\ $f_l = \sum_{p \in \PP: l \in p}f_p$. Since we need to decompose the total link flow into regular and autonomous vehicles, we let
$f_l^r$ and $f_l^a$ be the total flow of regular and autonmous vehicles along link $l$ respectively. In fact, $f_l^r$ and  $f_l^a$ are the summation of the flow of regular and autonomous vehicles on all routes containing link $l$,
\begin{align*}
f^r_l = \sum_{p \in \mathcal{P}: l\in p} f^r_p,\,\,\,\text{and} \,\,\, f^a_l = \sum_{p \in \mathcal{P}: l\in p} f^a_p.
\end{align*} 

Note that if all vehicles are regular for an O/D pair $w \in W$, i.e. $\alpha_w = 0$, then, we only have a single class of regular vehicles along that O/D pair, and for each path $p \in \PP_w$, $f_p = f_p^r$. 
If for all network O/D pairs $w \in W$, the autonomy fraction $\alpha_w = 0$; then, the same argument holds for link flows, $f_l = f_l^r$ for all links $l \in L$. In fact, if all vehicles are regular, our routing game reduces to a single class game
\begin{align}\label{eq:single_type}
\left( \forall w \in W,\; \alpha_w = 0 \right) \Longleftrightarrow \left(\forall p \in \PP,\; f_p = f_p^r \right).
\end{align}

In order to be able to model the incurred delays when vehicles are routed throughout the network, it is assumed that each link $l \in L$ has a delay per unit of flow function $e_l: \mathbb{R}^2 \rightarrow \mathbb{R}$. We assume that the delay per unit of flow for each path $p \in \PP$ is obtained by the summation of the link delays over the links that form $p$,
\begin{align}
e_p(f) = \sum_{l \in L: l \in p} e_l(f^r_l,f^a_l). \label{eq:route_cost}
\end{align}
Equation~\eqref{eq:route_cost} implies that the delay of each path $p \in \PP$ depends not only on the flows of regular and autonomous vehicles along path $p$, but also on the flows along other paths. The overall network delay or social delay is given by
\begin{align}
J(f) = \sum_{p\in \PP} f_p e_p(f).
\end{align}


\subsection{Wardrop Equilibrium}
It is well known in the transportation literature that if there are many noncooperative agents, namely, flows that behave selfishly~\cite{roughgarden2006severity}, a network is at an equilibrium if the well known Wardrop conditions hold~\cite{wardrop1952some}. The Wardrop conditions state that at equilibrium, no user has any incentive for unilaterally changing its path. This implies that for an equilibrium flow vector $f$, if there exists a path $p \in \mathcal{P}_w$ such that either $f_p^r \neq 0$ or $f_p^a \neq 0$, we must have that $e_p(f) \leq e_{p'}(f)$, for all paths $p' \in \PP_w$.

\begin{definition}
Given a network $G = (N,L,W)$, a flow vector $f$ is a Wardrop equilibrium if and only if for every O/D pair $w \in W$ and every $p,p' \in \mathcal{P}_w$
\begin{subequations}
\label{eq:equilib_conds}
\begin{gather}
\begin{align}
f^r_p \left(e_p(f)-e_{p'}(f)\right) &\leq 0, \\
f^a_p \left(e_p(f)-e_{p'}(f) \right) &\leq 0. 
\end{align}
\end{gather}
\end{subequations}
\end{definition}


\noindent Note that an implication of the above definition is that for each O/D pair $w \in W$, and any two paths $p,p' \in \PP_w$ such that $f_p \neq 0$ and $f_{p'}\neq 0$, we must have that $e_p(f) = e_{p'}(f)$. 

\begin{definition}
Given an equilibrium flow vector $f$ for the network $G = (N, L, W)$, we define the delay of travel for each O/D pair $w \in W$ to be 
\begin{equation}\label{eq:delay_travel}
e_w(f) := \min_{p \in \PP_w} e_p(f).
\end{equation}
\end{definition}

\noindent Motivated by the above discussion, $e_w(f)$ is precisely the delay across all paths $p \in \PP_w$ which have a nonzero flow. Moreover, the equilibrium condition implies that for a path $p \in \PP_w$ with zero flow, we have $e_p(f)  \geq e_w(f)$. 

It is worth mentioning that when there are no autonomous vehicles, i.e. for all $ w \in W, \alpha_w = 0$, since $f_p^r = f_p$ for all $p \in \PP$, Conditions~\eqref{eq:equilib_conds} reduce to
\begin{align}\label{eq:equilib_cond_single_com}
f_p \left(e_p(f)-e_{p'}(f)\right)\leq 0,\quad \forall w \in W, \;\forall p,p' \in \PP_w.
\end{align}

\subsection{Delay Characterization}

We first specify the structure of our delay functions. If there is only a single class of regular vehicles in the network, the US Bureau of Public Roads (BPR)~\cite{manual1964bureau} suggests the following form of delay functions.

\begin{assumption}
When network links are shared by only regular vehicles, the link delay functions $e_l: \mathbb{R} \rightarrow \mathbb{R}$ are of the following form

\begin{align}\label{eq:latency_func}
e_l(f_l) = a_l \left(1 + \gamma_l \left(\frac{f_l}{C_l}\right)^{\beta_l}\right),
\end{align}

\noindent where $C_l$ is the capacity of link $l$, and $a_l, \gamma_l,$ and $\beta_l$ are nonnegative link parameters. 
\end{assumption}

\noindent In practice, $a_l$ is the free flow travel time on $l$, $\gamma_l$ is normally $0.15$, and $\beta_l$ is a positive integer ranging from 1 to 4. 
In order to characterize the delay functions in networks with mixed autonomy, where we have two classes of vehicles, we first need to model the impact of autonomous vehicles on link capacities. In each network link $l \in L$, the link capacity $C_l$ restricts the maximum possible flow of vehicles. It was shown in~\cite{lazar2017capacity} that in networks with mixed autonomy, $C_l$ depends on the autonomy ratio of link $l$ defined as $\alpha_l := \frac{f_l^a}{f_l^a+f_l^r}$. We use $C_l(\alpha_l)$ to emphasize this dependence. Let $m_l$ and $M_l$ be the capacity of link $l$ when all vehicles are regular and autonomous respectively. Since autonomous vehicles are capable of maintaining shorter headways, it is normally the case that $ \frac{m_l}{M_l}\leq 1$. When the two classes of regular and autonomous vehicles are present in the network, using the results in~\cite{lazar2017capacity}, we have 

\begin{align}\label{eq:capac_model}
C_l(\alpha_l) &= \frac{m_l M_l}{\alpha_l m_l + (1-\alpha_l)M_l}. 
\end{align}

We adopt this model throughout this paper to investigate the mobility impact of autonomous vehicles on the network. Since for each link $l \in L$, $\alpha_l = \frac{f_l^a} {f_l^a+f_l^r}$ and $f_l = f_l^a + f_l^r$, using~\eqref{eq:capac_model}, for networks with mixed autonomy, the delay function~\eqref{eq:latency_func} can be modified as:

\begin{align}
e_l(f^r_l, f^a_l) &= a_l \left(1 + \gamma_l \left(\frac{f^r_l + f^a_l}{\frac{m_l M_l (f_l^r + f_l^a)}{m_l f_l^a + M_l f_l^a}} \right)^{\beta_l}\right).
 \\
&=
a_l \left(1 + \gamma_l \left(\frac{f^a_l}{M_l}+\frac{f_l^r}{m_l}\right)^{\beta_l}\right).\label{eq:latency_simplified}
\end{align}

Note that when only regular vehicles are present in the network, for each link $l \in L$ since $f_l = f_l^r$, the link delay function reverts to

\begin{align}\label{eq:latency_simplified_no_aut}
e_l(f_l) = a_l \left(1 + \gamma_l \left(\frac{f_l^r}{m_l}\right)^{\beta_l}\right).
\end{align}

 


\section{Prior Work }\label{sec:uniq}
\begin{figure}
\centering
\begin{tikzpicture}[scale=0.8]
\begin{scope}[every node/.style={circle,thick,draw}]
    \node (A) at (0,0) {A};
    \node (B) at (3,1.5) {B};
    \node (C) at (3,-1.5) {C};
    \node (D) at (6,0) {D};
    
\end{scope}

\begin{scope}[>={Stealth[teal]},
              every node/.style={fill=white,circle},
              every edge/.style={draw=teal,very thick}]
    \path [->] (A) edge node {$1$} (B);
    \path [->] (A) edge node {$3$} (C);
    \path [->] (B) edge node {$2$} (D);
    \path [->] (C) edge node {$4$} (D);
\end{scope}

\end{tikzpicture}
\caption{A network with a single O/D pair and two paths.}\label{fig:net_equ_1}
\end{figure}
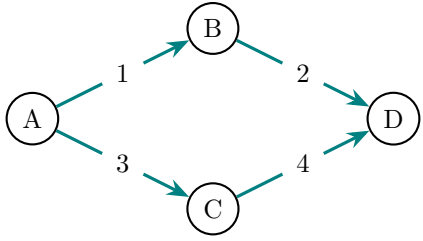

\subsection{Existence of Equilibrium}
We state the following proposition from~\cite{braess1979existence}
which studies the conditions under which a Wardrop Equilibrium exists for a multiclass traffic network.

\begin{proposition}\label{prop:exist}
Given a network $G = (N,L,W)$, if the link delay functions are continuous and monotone in the link flow of each class; then, there exists at least one Wardrop equilibrium. 
\end{proposition}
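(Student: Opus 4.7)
The plan is to reformulate the search for a Wardrop equilibrium as a variational inequality (VI) over the feasible flow set and then invoke the Hartman--Stampacchia existence theorem. I would not try to construct a potential (Beckmann--type) function, since with two interacting vehicle classes the link cost $e_l(f_l^r,f_l^a)$ need not arise from an integrable field, so a direct convex-optimization argument is unavailable.

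First I would set up the geometry. The feasible set
\[
F = \{\, f = (f_p^r,f_p^a)_{p\in\PP} : f \text{ satisfies } \eqref{eq:feasibility} \,\}
\]
is a nonempty, compact, convex polytope in a finite-dimensional Euclidean space, since it is cut out by finitely many linear equality constraints (one per class per O/D pair) and nonnegativity constraints. The link flows $f_l^r,f_l^a$ are linear functions of the path variables, so by the continuity hypothesis on $e_l$ in each argument, each path delay $e_p(f)=\sum_{l\in p} e_l(f_l^r,f_l^a)$ is continuous on $F$, and the cost map $f\mapsto (e_p(f))_{p\in\PP}$ is continuous.

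Next I would establish the equivalence between the Wardrop conditions~\eqref{eq:equilib_conds} and the VI
\[
\sum_{p\in\PP} e_p(f)\bigl(g_p^r - f_p^r\bigr) + \sum_{p\in\PP} e_p(f)\bigl(g_p^a - f_p^a\bigr) \geq 0, \quad \forall\, g \in F.
\]
For the forward direction, fix $w\in W$: among paths in $\PP_w$, by~\eqref{eq:equilib_conds} the regular flow is supported on paths achieving $e_w(f)=\min_{p\in\PP_w}e_p(f)$, so $\sum_{p\in\PP_w} e_p(f)(g_p^r-f_p^r)\geq e_w(f)\sum_{p\in\PP_w}(g_p^r-f_p^r)=0$ by the conservation constraint $\sum_{p\in\PP_w}g_p^r=\sum_{p\in\PP_w}f_p^r=(1-\alpha_w)r_w$; the same holds for the autonomous class, and summing over $w$ yields the VI. For the converse I would argue by contradiction: if $f_p^r>0$ for some $p\in\PP_w$ with $e_p(f)>e_{p'}(f)$ for another $p'\in\PP_w$, then shifting a small mass $\epsilon$ of regular flow from $p$ to $p'$ produces a feasible $g\in F$ violating the VI, and analogously for the autonomous class.

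Finally I would apply Hartman--Stampacchia: a continuous cost map on a nonempty, compact, convex subset of a Euclidean space admits a VI solution $f^\ast\in F$, which by the equivalence just established is a Wardrop equilibrium. The main obstacle is really just the bookkeeping in verifying the VI equivalence cleanly with two classes, making sure the class-wise conservation constraints line up so that the summation over paths in $\PP_w$ telescopes against a common path-independent minimum delay $e_w(f)$. I would note that the monotonicity hypothesis is not required for existence per se (continuity alone suffices via Hartman--Stampacchia), but plays an auxiliary role in the formulation used by Braess~\cite{braess1979existence} and is the handle typically invoked for uniqueness of the induced link flows; here I would flag it as assumed but not exploited in the existence argument.
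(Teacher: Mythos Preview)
Your argument is correct: the variational-inequality reformulation over the compact convex feasible polytope $F$, together with Hartman--Stampacchia, is the standard route to existence for multiclass Wardrop equilibria, and your verification of the equivalence between~\eqref{eq:equilib_conds} and the VI is clean. You are also right that monotonicity is not needed for existence; continuity suffices.

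However, the paper does not give its own proof of this proposition. It is stated in the ``Prior Work'' section and attributed to~\cite{braess1979existence}; the paper simply invokes it (see Remark~\ref{remark:exist}) to guarantee that an equilibrium exists before moving on to uniqueness and monotonicity of the social delay. So there is no in-paper argument to compare against---your proposal supplies a full proof where the paper defers to the literature.
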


\begin{remark}
\label{remark:exist}
Using~\eqref{eq:latency_simplified}, since our assumed delay functions are nonnegative, continuous, and monotone in the flow of each class, Proposition~\ref{prop:exist} implies that there always exists at least one Wardrop equilibrium for a routing game with mixed autonomy. 
\end{remark}

\subsection{Uniqueness of Equilibrium}
In this part, we review the known results regarding the uniqueness of the Wardrop Equilibrium. When multiple classes of vehicles are present in the network, the uniqueness of the equilibrium flow vector does not hold. However, uniqueness in a weak sense is known to hold from~\cite{altman2006survey}.


%
 
\begin{proposition}\label{prop:weak_uniq}For a general topology network $G$ with multiple classes of vehicles on each O/D pair, if the delay functions are of the form~\eqref{eq:latency_func}, and the link capacities $C_l$ are fixed and the same for all vehicle classes, for a given demand vector $r$, we have

\begin{enumerate}
\item The equilibrium is unique in a weak sense, i.e. for each link $l$, the
  total flow $f_l$ for all Wardrop equilibrium flow vectors $f$ is unique. 
\item For each O/D pair $w \in W$, the delay of travel $e_w(f)$ is unique for
  all Wardrop equilibrium flow vectors $f$. Thus, the delay of travel for each
  O/D pair in equilibrium, i.e. $e_w(f)$,  only depends on  the network demand vector $r$.
 Hence, we may unambiguously define  $e_w(r)$ to denote this unique value.
\end{enumerate} 
\end{proposition}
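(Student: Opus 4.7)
The plan is to reduce the multiclass problem to the classical single-class Wardrop setting by aggregating flows, and then invoke the strict convexity of the Beckmann potential. The critical observation enabling this reduction is that, under the hypothesis that each link capacity $C_l$ is common to all vehicle classes, the delay $e_l(f_l^r, f_l^a) = a_l\bigl(1 + \gamma_l(f_l/C_l)^{\beta_l}\bigr)$ is a function of the total link flow $f_l = f_l^r + f_l^a$ alone. I would first make this explicit and rewrite the link delay as $\tilde e_l(f_l)$, so that the path delays $e_p(f) = \sum_{l \in p}\tilde e_l(f_l)$ depend on $f$ only through the aggregate link flows.

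Next I would aggregate the path flows across classes: for every $p \in \PP$, set $f_p = f_p^r + f_p^a$, and note that whenever $f_p > 0$ at least one of $f_p^r, f_p^a$ is strictly positive, so the equilibrium conditions~\eqref{eq:equilib_conds} force $e_p(f) \leq e_{p'}(f)$ for every $p' \in \PP_w$. Since $e_p(f)$ depends only on the aggregate link flows, this reproduces the single-class Wardrop condition~\eqref{eq:equilib_cond_single_com} on $\{f_p\}$. Summing the class-level feasibility relations~\eqref{eq:feasibility} gives $\sum_{p \in \PP_w} f_p = r_w$, so the aggregate vector is feasible for a standard single-class routing game with demand $r$.

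I would then invoke the Beckmann formulation: the aggregate equilibrium flows coincide with the minimizers of
\[
\min_{f \in \mathcal{F}}\Phi(f),\qquad \Phi(f) = \sum_{l \in L}\int_0^{f_l}\tilde e_l(x)\,dx,
\]
where $\mathcal{F}$ is the convex polyhedron of feasible aggregate flows. Because $\tilde e_l$ is strictly increasing in $f_l$ under the BPR form (since $a_l\gamma_l > 0$ and $\beta_l \geq 1$), each integrand is strictly convex in $f_l$, so $\Phi$ is strictly convex in the link-flow variables. Although different path-flow or class-flow decompositions may achieve the same optimum, the induced total link flows $\{f_l\}$ must all coincide. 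Given this, $\tilde e_l(f_l)$ is uniquely determined on every link, hence so is $e_p(f)$ for every path, and therefore $e_w(f) = \min_{p \in \PP_w} e_p(f)$ depends only on the demand $r$, justifying the notation $e_w(r)$.

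The main conceptual point to get right is the meaning of \emph{weak} uniqueness: only the aggregate link flows are pinned down, while the class-specific decomposition of the equilibrium generally remains non-unique, as do path-flow decompositions in the presence of parallel or redundant routes. A related subtlety, which is really what motivates the rest of the paper, is that the argument relies crucially on $C_l$ being independent of the autonomy ratio---precisely the assumption that fails in the mixed-autonomy capacity model~\eqref{eq:capac_model} whenever $m_l \neq M_l$, so that the BPR delay~\eqref{eq:latency_simplified} ceases to be a function of $f_l^r + f_l^a$ alone and the above aggregation argument breaks down.
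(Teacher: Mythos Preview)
The paper does not actually prove this proposition: it is stated in the ``Prior Work'' section as a known result, with a citation to the survey~\cite{altman2006survey}, and no argument is given. So there is no proof in the paper to compare against.

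Your argument is correct and is essentially the classical one: because the common-capacity hypothesis makes each link delay a function of the aggregate flow $f_l$ alone, summing the class flows produces a single-class Wardrop equilibrium with demand $r$, and the strict convexity of the Beckmann potential in the link-flow variables then pins down $\{f_l\}$ uniquely, from which uniqueness of $e_w(f)$ follows. One small caveat: strict convexity of the integrands requires each $\tilde e_l$ to be strictly increasing, which in turn needs $a_l\gamma_l>0$ and $\beta_l>0$; the paper only assumes these parameters are nonnegative, so on links where any of them vanish the delay is constant and the link flow need not be unique (though the link delay still is, so part~2 survives). This is a standard technicality in the literature and does not affect the substance of your argument. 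Your closing remark about why the reduction fails under the mixed-autonomy model~\eqref{eq:capac_model} is exactly the point the paper goes on to exploit.
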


\begin{remark}\label{rem:sing_type}
Note that a routing game that has only a single class of vehicles can be viewed as an instance of the games described in Proposition~\ref{prop:weak_uniq}. Therefore, uniqueness in a the weak sense applies to games with a single class of vehicles too. 
\end{remark}

\subsection{Monotonicity of Social Delay}

As we discussed above, in general, the equilibrium is not unique. However, if the conditions of Proposition~\ref{prop:weak_uniq} hold for a network, the social delay and the delay of travel for each O/D pair  are unique.
For a single class routing game on $G = (N, L, W)$, 
recall the following from~\cite{hall1978properties}.  
\begin{proposition}\label{prop:contin_latency}
Consider a network $G = (N,L,W)$, where only one class of vehicles exists for each O/D pair $w \in W$. Assume that for  each link $l \in L$, $e_l(.)$ is continuous, positive valued, and monotonically increasing. Then, for each $w \in W$, the delay of travel $e_w(r)$ is a continuous function of the demand vector $r$. Furthermore, $e_w(.)$ is nonincreasing in $r_w$ when all other demands $r_{w'}, w' \neq w$, are fixed. 
\end{proposition}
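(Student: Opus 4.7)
The plan rests on the Beckmann potential reformulation of the single--class Wardrop equilibrium. For a given demand vector $r$, define
\begin{equation*}
\Phi(f) \;=\; \sum_{l \in L} \int_0^{f_l} e_l(x)\,dx,
\end{equation*}
and let $F(r)$ denote the compact polytope of nonnegative path flow vectors satisfying $\sum_{p \in \PP_w} f_p = r_w$ for every $w \in W$. Because each $e_l$ is continuous, positive and monotonically increasing, $\Phi$ is continuous and convex on $F(r)$, and strictly convex as a function of the link flow vector $(f_l)_{l \in L}$. A standard KKT calculation matches the stationarity conditions for $\min_{f \in F(r)} \Phi(f)$ with the Wardrop inequality~\eqref{eq:equilib_cond_single_com}, so $f^\star$ is a Wardrop equilibrium if and only if it minimizes $\Phi$ over $F(r)$; strict convexity in link flows gives a unique equilibrium link flow vector, and Proposition~\ref{prop:weak_uniq} together with~\eqref{eq:delay_travel} then makes $e_w(r)$ a well-defined function of $r$.

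Continuity of $e_w(r)$ would follow from Berge's maximum theorem applied to this parametric convex program. The feasibility correspondence $r \mapsto F(r)$ is nonempty, compact-valued, and jointly continuous because $r$ enters only on the right-hand side of the linear equality constraints, while $\Phi$ is jointly continuous in $(f,r)$ since it does not depend on $r$ explicitly. Berge therefore yields upper hemicontinuity of the argmin correspondence; combined with uniqueness of the equilibrium link flows, this upgrades to genuine continuity of the maps $r \mapsto f^\star_l(r)$ for every $l \in L$. Then each path cost $e_p(f^\star(r)) = \sum_{l \in p} e_l(f^\star_l(r))$ is continuous in $r$, and $e_w(r) = \min_{p \in \PP_w} e_p(f^\star(r))$ is a minimum of finitely many continuous functions, hence continuous.

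For the monotonicity claim, I would exploit the Lagrangian interpretation of the demand constraints. The KKT multiplier associated with the equality $\sum_{p \in \PP_w} f_p = r_w$ coincides with the equilibrium travel delay $e_w(r)$, so the optimal value $V(r) := \min_{f \in F(r)} \Phi(f)$ is convex in $r$ and admits $e_w(r)$ as a subgradient in the $r_w$ direction. Rather than rely on smoothness of $V$, I would run an exchange argument directly at the level of flows: given equilibria $f$ and $f'$ corresponding to demand vectors $r$ and $r'$ that differ only in coordinate $w$, subtract the conservation constraints on $w$, route the surplus along a path attaining the minimum in~\eqref{eq:delay_travel} for the larger demand, and apply the Wardrop inequality~\eqref{eq:equilib_cond_single_com} together with the monotonicity of each $e_l$ to obtain the required comparison between $e_w(r)$ and $e_w(r')$.

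The main obstacle is the monotonicity portion: continuity is almost immediate from Berge once the Beckmann reformulation is in place, but pinning down the sign of the change in $e_w(r)$ requires propagating the perturbation along entire paths while simultaneously respecting the Wardrop conditions at both demand vectors. This is delicate in general networks where many paths share links, so the exchange must be carefully set up so that the link-level monotonicity of each $e_l$ actually feeds through to a path-level monotonicity statement for $e_w$.
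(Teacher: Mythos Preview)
The paper does not supply its own proof of this proposition: it is quoted in the ``Prior Work'' section as a known result from Hall~(1978), so there is no in-paper argument to compare against. Your Beckmann-potential plan is the standard route and is essentially how the result is established in the literature: the continuity portion via Berge's maximum theorem on the parametric program $\min_{f\in F(r)}\Phi(f)$ is correct, and identifying $e_w(r)$ with the Lagrange multiplier of the constraint $\sum_{p\in\PP_w}f_p=r_w$ (equivalently, with a subgradient of the convex value function $V(r)$) is exactly the right mechanism for the monotonicity claim.

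There is, however, a genuine issue you should flag rather than paper over: the statement as printed says $e_w(\cdot)$ is \emph{nonincreasing} in $r_w$, but the argument you outline---and Hall's original result---actually yields \emph{nondecreasing}. Convexity of $V$ makes the subgradient map monotone, so $r_w\mapsto e_w(r)$ is nondecreasing; an exchange argument gives the same sign. This is consistent with how the proposition is used in the proof of Theorem~\ref{theo:homog}: there the effective scalar demand $\tilde r=r(1-\alpha)+\mu\alpha r$ is nonincreasing in $\alpha$ (since $\mu\le1$), and composing a nondecreasing $\tilde e_w(\cdot)$ with this nonincreasing map is what makes $J(\alpha)=r\,\tilde e_w(\tilde r)$ nonincreasing in $\alpha$. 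So your proof plan is sound, but it proves the correct (nondecreasing) direction; the ``nonincreasing'' in the proposition appears to be a typographical slip, and your write-up should say so explicitly rather than try to force the exchange argument to produce the wrong inequality.
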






\section{Uniqueness in the Mixed-Autonomy Setting}\label{sec:uniq_mixed}

\indent Now we study equilibrium uniqueness in our setting. Using Remark~\ref{remark:exist}, we know that there exists at least one equilibrium. However, since in our setting, for each link $l$, $C_l$ depends on the autonomy ratio $\alpha_l$, Proposition~\ref{prop:weak_uniq} does not apply. Indeed, we demonstrate through an example that the equilibrium is \emph{not} unique even in the weak sense introduced in Proposition~\ref{prop:weak_uniq}. 

\begin{example}
 Consider the network of Figure~\ref{fig:net_equ_1}. Let $p_1$ and $p_2$ be the ABD and ACD paths respectively. For each link $l = 1,\cdots, 4$, let the link parameters be $\beta_l = 1, a_l = 1, m_l = 1$, and $, M_l = 2$. Thus, for each link $l \in L$, the link delay function is $e_l = 1+f_l^r+\dfrac{f_l^a}{2}$. Assume that the demand from node A to D is $r=2$, and $\alpha = 0.5$. The example is simple enough so that we can compute the equilibrium flows manually. Let $f_1^r$ and $f_1^a$ be the regular and autonomous vehicles flows along $p_1$, and $f_2^r$ and $f_2^a$ be the regular and autonomous flows along $p_2$. At equilibrium, using the symmetry of the network, we must have
\begin{align*}
2 + 2f_1^r + f_1^a &= 2 + 2f_2^r + f_2^a \\
f_1^r + f_2^r &= 1 \\
f_1^a + f_2^a &= 1 \\
f_1^r, f_1^a, f_2^r, f_2^a &\geq 0.
\end{align*}
Clearly, there is no unique solution to the above set of equations. Moreover, among the set of all possible equilibrium flow vectors, for each link, the maximum link flow at equilibrium is 1.25, whereas the minimum link flow is 0.75 at equilibrium. 
This implies that equilibrium uniqueness does not hold even in the weak sense for traffic networks with mixed autonomy.  

\end{example}

\section{Networks with a Single O/D Pair} \label{sec:single_OD}

\begin{figure}
\centering
\begin{tikzpicture}[scale=0.8]
\begin{scope}[every node/.style={circle,thick,draw}]
    \node (A) at (0,0) {A};
    \node (B) at (3,1.5) {B};
    \node (C) at (3,-1.5) {C};
    \node (D) at (6,0) {D};
    
\end{scope}

\begin{scope}[>={Stealth[teal]},
              every node/.style={fill=white,circle},
              every edge/.style={draw=teal,very thick}]
    \path [->] (A) edge node {$1$} (B);
    \path [->] (A) edge node {$3$} (C);
    \path [->] (B) edge node {$2$} (D);
    \path [->] (C) edge node {$4$} (D);
    \path [->] (B) edge node {$5$} (C);
\end{scope}

\end{tikzpicture}
\caption{A network with a single O/D pair and three paths from A to D.}
\label{fig:Braes}
\end{figure}
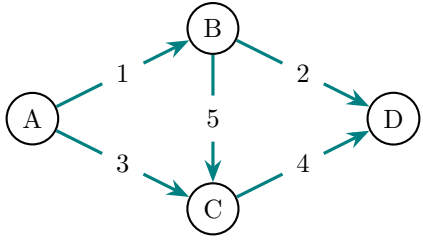

In this section, we study two terminal networks which have a single O/D pair in the presence of autonomy. For such networks, since there is only one O/D pair, all paths originate from a common source $o$ and end in a common destination $d$. Since $W$ is singleton, we omit the subscript $w$ from $r_w$, $e_w$ and $\alpha_w$ throughout this section. Note that when the network has a single O/D pair, $r$ and $\alpha$ are scalars.

Having observed that in the mixed-autonomy setting, the equilibrium is not unique even in the weak sense, it is important to study if the social delay is unique for all network equilibrium flow vectors. To this end, in the following,  
we study the properties of the social delay including its uniqueness. To this end, we need to define the notion of road degree of capacity asymmetry introduced in~\cite{lazar2017price}. Given a network $G=(N,L,W)$, for each link $l \in L$, we define $\mu_l := m_l/M_l$ to be the degree of capacity asymmetry of link $l$. Note that since we assumed that autonomous vehicles headway is less than or equal to that of regular vehicles, for each link $l \in L$, $\mu_l \leq 1$. In the sequel, we consider two scenarios for investigating the properties of social delay: 

\begin{enumerate}
\item Homogeneous degrees of road capacity asymmetry, where $\mu_l$ is the same for all links, i.e. $\mu_l = \mu$, for all links $l \in L$, where $\mu$ is the common value of capacity asymmetry.
\item Heterogeneous degrees of capacity asymmetry, where $\mu_l$ varies on different links. 
\end{enumerate}

\subsection{Homogeneous Degrees of Capacity Asymmetry}

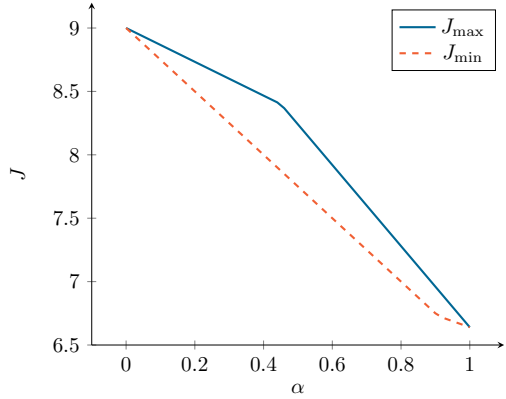
\begin{figure}
\centering
 \begin{tikzpicture}[scale=0.8]
\begin{axis}
[
no markers,
axis x line=bottom,
axis y line=left,
xmax=1.1, ymax = 9.2,
xmin=-0.1, ymin=6.5,
legend entries = {$J_\text{max}$, $J_\text{min}$},
xlabel={$\alpha$},
ylabel={$J$},
]
\addplot[MidnightBlue,line width=1pt] table [x=a, y=max, col sep=comma,mark = none] {f0_2__a0_1_2_3_1_1__m0_1_1_1_1_1__M0_1_3_3_2_4___50-points.txt};
\addplot[RedOrange, dashed,line width=1pt] table [x=a, y=min, col sep=comma, mark = none] {f0_2__a0_1_2_3_1_1__m0_1_1_1_1_1__M0_1_3_3_2_4___50-points.txt};
\end{axis}
\end{tikzpicture}
\caption{Maximum and minimum social delay for Example~\ref{ex:non_unique_delay}.}\label{fig:dec_Js}
\end{figure}

In this case, we can establish the uniqueness of the social delay, and characterize the relationship between social delay and network autonomy ratio. 

\begin{theorem}\label{theo:homog}
Given a network $G = (N, L, W)$ with a single O/D pair and a
homogeneous degree of capacity asymmetry $\mu$, for any demand vector $r > 0$, we have:
\begin{enumerate}
\item For a fixed autonomy ratio $0 \leq \alpha\leq 1$, the social delay $J(f)$ is unique for all Wardrop equilibrium flow vectors $f$. 
\item If for each $0 \leq \alpha \leq 1$, we denote the common value of social delay in the above by $J(\alpha)$, then $J(.)$ is continuous and nonincreasing.
\end{enumerate}
\end{theorem}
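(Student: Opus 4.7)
The plan is to reduce the mixed-autonomy single-O/D game to an ordinary single-class Wardrop problem via an \emph{effective-flow} change of variables, and then invoke Propositions~\ref{prop:weak_uniq} and~\ref{prop:contin_latency} on the reduced game. Because $\mu_l = \mu$ is common to all links, the delay \eqref{eq:latency_simplified} rearranges to
$$e_l(f_l^r, f_l^a) \;=\; a_l\!\left(1 + \gamma_l \left(\tfrac{\mu f_l^a + f_l^r}{\mu M_l}\right)^{\beta_l}\right) \;=:\; \tilde e_l(\tilde f_l), \qquad \tilde f_l := \mu f_l^a + f_l^r,$$
so each link latency becomes a BPR function of a single scalar with fixed ``capacity'' $\mu M_l$. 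Summing over $l\in p$ gives $\tilde e_p(\tilde f) = e_p(f)$, meaning both vehicle classes see, and hence minimize, the same quantity on every path.

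The first step is the equilibrium correspondence: for any mixed-game Wardrop equilibrium $(f^r,f^a)$, the induced effective flow $\tilde f$ is a single-class Wardrop equilibrium on the derived network with latencies $\tilde e_l$ and total demand
$$\tilde r \;=\; \sum_p \tilde f_p \;=\; \mu\alpha r + (1-\alpha)r \;=\; \bigl(1-\alpha(1-\mu)\bigr)\,r.$$
Since both classes place flow only on min-latency paths at equilibrium, every $p$ with $\tilde f_p > 0$ satisfies $\tilde e_p(\tilde f) = e^{\star} := \min_{p'} e_{p'}(f)$. Proposition~\ref{prop:weak_uniq}, applied to this single-class reduced game as permitted by Remark~\ref{rem:sing_type}, then yields that the effective link flows $\tilde f_l$ and the common travel time $e^{\star}$ are identical across \emph{all} mixed equilibria. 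Because every vehicle traverses a path of latency $e^{\star}$, the social delay collapses to
$$J(f) \;=\; \sum_p f_p\, e_p(f) \;=\; e^{\star}\sum_p f_p \;=\; r\, e^{\star},$$
which establishes part (1).

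For part (2), the identity $J(\alpha) = r\,e^{\star}(\alpha)$ reduces the claim to showing that $\alpha\mapsto e^{\star}(\alpha)$ is continuous and nonincreasing. The affine map $\alpha\mapsto\tilde r(\alpha) = (1-\alpha(1-\mu))\,r$ is continuous in $\alpha$ and nonincreasing since $\mu\le 1$. Because the reduced latencies $\tilde e_l$ are continuous, positive, and strictly increasing, Proposition~\ref{prop:contin_latency} applied to the reduced single-class game guarantees that its equilibrium travel time, viewed as a function of the scalar demand $\tilde r$, is continuous and nonincreasing. Composition then yields the desired continuity and monotonicity of $J(\alpha)$. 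The step that demands the most care---and where I would concentrate the detailed argument---is the equilibrium correspondence: one must verify that conditions~\eqref{eq:equilib_conds} for the mixed game imply the single-class Wardrop conditions for $\tilde f$ on the derived latencies, and that $\tilde e_l$ meets the hypotheses of Propositions~\ref{prop:weak_uniq} and~\ref{prop:contin_latency}. Once that reduction is secured, both parts of the theorem follow by substitution and composition.
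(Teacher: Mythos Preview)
Your proposal is correct and follows essentially the same approach as the paper's proof: the same effective-flow substitution $\tilde f_p = f_p^r + \mu f_p^a$, the same reduced single-class game with demand $\tilde r = (1-\alpha+\mu\alpha)r$, the verification that equilibria correspond and that path latencies coincide, and the same appeals to Propositions~\ref{prop:weak_uniq} and~\ref{prop:contin_latency} to obtain uniqueness and then continuity/monotonicity via composition. The only cosmetic difference is that you write the reduced link capacity as $\mu M_l$ while the paper uses $m_l$; under homogeneity these are equal.
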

\begin{proof}
Fix $r >0$ and $0 \leq \alpha \leq 1$. Recalling Remark \ref{remark:exist}, we know that a Wardrop equilibrium exists. Let $f = (f^r_p, f^a_p: p \in \PP)$ be such an equilibrium flow vector where $f_p = f_p^a + f_p^r$ for each path $p$ in $\PP$. Define  $e_\text{min}(f) := \min_{p \in \PP} e_p(f)$.
Since the network has only one O/D pair, and the delay associated with all paths
with nonzero flows are the same, denoting this uniform path delay by
$e_\text{min}(f)$, we realize that the social delay is given by $J(f) = r
e_\text{min}(f)$. For each path $p \in \PP$, define the fictitious single-class
regular flow $\tilde{f}_p := f^r_p + \mu f_p^a$. We claim that the flow vector
$\tilde{f} = (\tilde{f}_p: p \in \PP)$ is  a Wardrop equilibrium for a routing
game on $G$ with a single class of regular vehicles and a total demand of
$\tilde{r} = r(1-\alpha) + r \alpha \mu$ with the delay function
  $(\tilde{e}_l: l \in L)$ defined as
  \begin{equation*}
    \tilde{e}_l(\tilde{f}_l) = a_l \left (1+\gamma_l\left(\frac{\tilde{f}_l}{  m_l}\right) ^{\beta_l}\right).
  \end{equation*}


To see this, for each $p\in \PP$, we show that relations \eqref{eq:equilib_cond_single_com} hold. Fix $p, p' \in \PP$ and note that since $f$ was a Wardrop equilibrium in the original setting, we have $f^r_p (e_p(f) - e_{p'}(f)) \leq 0$, and $f^a_p (e_p(f) - e_{p'}(f)) \leq 0$. 
Multiplying the latter by the positive constant $\mu$ and adding the two inequalities, we have 
\begin{equation}
\label{eq:tilde-f-e}
\tilde{f}_p (e_p(f) - e_{p'}(f)) \leq 0, \qquad \forall p,p' \in \PP.
\end{equation}
Now, we claim that for all $p \in \PP$, we have $e_p(f) = \tilde{e}_p(\tilde{f})$. Note that for each link $l \in L$, we have $\tilde{f}_l = f_l^r + \mu f_l^a$. Using the fact that $\mu = m_l / M_l$ for all $l \in L$, we get
\begin{equation}
\label{eq:tilde-e-e}
\begin{aligned}
\tilde{e}_p(\tilde{f}) &= \sum_{l \in p} a_l \left ( 1 + \gamma_l \left ( \frac{f_l^r + \frac{m_l}{M_l} f_l^a}{m_l} \right )^{\beta_l} \right )  \\
&= \sum_{l \in p} a_l \left ( 1 + \gamma_l \left ( \frac{f_l^r}{m_l} + \frac{f_l^a}{M_l} \right )^{\beta_l} \right ) = e_p(f).
\end{aligned}
\end{equation}

\noindent Substituting into \eqref{eq:tilde-f-e}, we realize that

\begin{equation}
\tilde{f}_p (\tilde{e}_p(\tilde{f}) - \tilde{e}_{p'}(\tilde{f})) \leq 0, \qquad \forall p,p' \in \PP,
\end{equation}
which means that $\tilde{f}$ is an equilibrium flow vector. 
 Clearly, the total demand of this new routing game is $\tilde{r} = \sum_{p \in \PP} \tilde{f}_p = \sum_{p \in \PP} f^r_p + \mu f^a_p = (1-\alpha)r+\mu \alpha r$.
Moreover, define $\tilde{e}_\text{min}(\tilde{f})$ to be  the minimum of $\tilde{e}_p(\tilde{f})$ among $p\in \PP$. Since $w$ is the single O/D pair of the network, $\tilde{e}_\text{min}(\tilde{f})$ is indeed equal to $\tilde{e}_w(\tilde{f})$, the travel delay of the single O/D pair of the network associated with $\tilde{f}$. Note that Proposition~\ref{prop:weak_uniq} implies that $\tilde{e}_\text{min}(\tilde{f})$ is a function of $\tilde{r}$ only. On the other hand, \eqref{eq:tilde-e-e} implies that $\tilde{e}_\text{min}(\tilde{f}) = e_\text{min}(f)$. Putting these together, we realize that
\begin{equation*}
J(f) = r e_\text{min}(f) = r \tilde{e}_\text{min}(\tilde{f}) = r \tilde{e}_w(\tilde{r}).
\end{equation*}
Note that the right hand side of the above identity does not depend on $f$, which establishes the proof of the first part. In fact, this shows that 
\begin{equation*}
J(\alpha) = r \tilde{e}_w(r (1-\alpha) + \alpha \mu r).
\end{equation*}
From Proposition~\ref{prop:contin_latency}, we know that $\tilde{e}_w(.)$ is continuous and nonincreasing. Also, since $\mu \leq 1$, the map $r \mapsto r (1-\alpha) + \alpha \mu r$ is continuous and nonincreasing. This completest the proof of the second part. 
\end{proof}


\subsection{Heterogeneous Degrees of Capacity Asymmetry}

Now, we allow $\mu_l$ to vary among the network links. We show that this makes the behavior of the system more complex. First, we show via the following example that the social delay is not necessarily unique in this case. 
\begin{example}\label{ex:non_unique_delay}
Consider the network shown in Figure~\ref{fig:Braes}. Assume that $\gamma_l = 1, \beta_l = 1$, for $l= 1,2,\cdots, 5$. Let the other link parameters be the following: $\{a_1 = 1, m_1 = 1, M_1 = 1\}$, $\{a_2 = 2, m_2 = 1, M_2 = 3 \}$, $\{a_3 = 1, m_3 = 1, M_3 = 2\}$, $\{a_4 = 1, m_4 = 1, M_4 = 4\}$, and $\{a_5 = 3, m_5 = 1, M_5 = 3\}$. Moreover, let the total flow from origin A to destination D be 2. Now, if we compute the social delay for this network for any $\alpha > 0$ at the different equilibria of the system, we observe that the social delay is \emph{not} unique. In particular, Figure~\ref{fig:dec_Js} shows the plots of the maximum and minimum social delay of the system at equilibrium for every value of $\alpha$. As Figure~\ref{fig:dec_Js} shows, as soon as $\alpha$ starts to increase from 0, uniqueness of social delay is lost. Once, $\alpha = 1$, the uniqueness of social delay is again preserved. This behavior implies that the change in the social delay due to increasing the autonomy ratio of the network is dependent on \emph{which} equilibrium the system will be at.
\end{example}

\begin{figure}
\centering
 \begin{tikzpicture}[scale=0.9]
\begin{axis}
[
no markers,
axis x line=bottom,
axis y line=left,
xmax=1.1, ymax = 553,
xmin=-0.1, ymin=542,
legend entries = {$J_\text{max}$, $J_\text{min}$},
legend style={at={(0.7,0.93)},anchor=west},
xlabel={$\alpha$},
ylabel={$J$},
]


\addplot[MidnightBlue,line width=1pt] coordinates { (0,2724/5) (1/3,552) (1,39060/71) };
\addplot[RedOrange,dashed,line width=1pt] coordinates { (0,2724/5) (88/123,22260/41) (1,39060/71) };

\end{axis}
\end{tikzpicture}
\caption{Maximum and minimum social delays for the Example~\ref{ex:counter_intuitive}.}
\label{fig:counter-intuitive}
\end{figure}
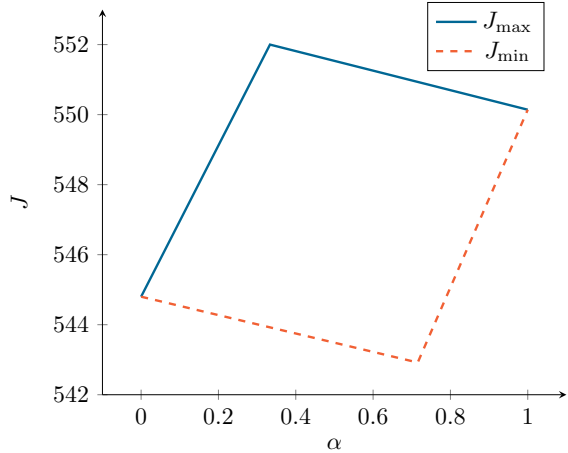

Now, we study the effect of increasing network autonomy on the social delay. In the previous example, both the maximum and minimum social delays decreased as a function of $\alpha$. But, is this necessarily the case? We use the following examples to demonstrate that it may not be true in general, as increasing network autonomy may increase social delay in some networks. 

\begin{example}
Consider the network of Figure~\ref{fig:Braes}. Let $\gamma_l = 1$ and $\beta_l=1$ for all links. Select the other network parameters to be the following, $\{a_1 =0, m_1 = 0.1, M_1 = 0.1 \}$, $\{a_2 = 50, m_2 = 1, M_2 = 1 \}$, $\{a_3 = 50, m_3 = 1, M_3 = 1 \}$, $\{ a_4 =0, m_4 = 0.1, M_4 = 0.1 \}$, $\{a_5 = 10, m_5 = 0.5, M_5 = 1 \}$. Let the total O/D demand be  $r= 6$. In the absence of autonomy ($\alpha = 0$), the social delay is $J = 504.3$. However, if we increase the autonomy ratio to $\alpha = \frac{1}{10}$, $J = 518.6$. Clearly, in this case, the social delay increases when the network autonomy ratio $\alpha$ is increased. Note that since $\mu_l = 1$ for $ l = 1,2,3,4$ and $\mu_5 = 0.5 < 1$, this can be viewed as an instance of the classical Braess's Paradox~\cite{braess1968paradoxon}, where an increase in the capacity of the middle link of a Wheatstone network can paradoxically lead to an increase in the social delay.

\end{example}

 One might argue that if we allow $\mu_l$ to be strictly less than 1 for all network links $l \in L$, the network social delay will decrease. We use the following example to show that even in this case, increasing autonomy can worsen social delay.
 
\begin{example}\label{ex:counter_intuitive}
Consider the previous example with the total flow $r = 6$, but change $M_l$'s to be, $M_1 = \frac{1}{9},\, M_2 = 1.1,\, M_3 = 1.1,\, M_4 = \frac{1}{9},$ and $M_5 = 1$. In this case, clearly, $\mu_l < 1,$ for all $ l \in L$. We computed the maximum and minimum social delay at equilibrium for every autonomy fraction $\alpha$. Figure~\ref{fig:counter-intuitive} shows the maximum and minimum social delay in this example for different values of $\alpha$. Figure~\ref{fig:counter-intuitive} demonstrates that the maximum social delay increases as we increase $\alpha$ from 0, until we reach a local maximum. The minimum social delay decreases as we increase $\alpha$ from 0, until we reach a local minimum, and then, it increases sharply to values that are higher than the social delay at $
\alpha = 0$. Surprisingly, when all vehicles are autonomous ($\alpha=1$) the social delay is greater than the social delay when $\alpha=0$, i.e. $J( \alpha =
 1) > J(\alpha = 0)$. This might be counter intuitive as we expect the network with full autonomy to have smaller social delay. However, this example shows that when capacity increases are heterogeneous across the network, the selfish behavior of the vehicles when making their route choices might actually lead to worsening the social delay of the network. Therefore, the mobility benefits obtained from the introduction of autonomous vehicles in the network, in terms of decreasing network social delay, are not obvious. 

\end{example}

 As mentioned previously, the increase in social delay due to an increase in the fraction of autonomous vehicles is in fact a particular instance of Braess's paradox. Braess's Paradox is the counterintuitive but well known fact that removing edges from a network or increasing the delay functions on certain links can improve social delay~\cite{roughgarden2006severity}. In our problem setting, replacing a fraction of regular vehicles by autonomous vehicles can be interpreted as replacing the link delay function $a_l \left(1 + \gamma_l \left(\frac{f^a_l}{m_l}+\frac{f_l^r}{m_l}\right)^{\beta_l}\right)$ by $a_l \left(1 + \gamma_l \left(\frac{f^a_l}{M_l}+\frac{f_l^r}{m_l}\right)^{\beta_l}\right)$ for every link $l \in L$.
 It was shown in previous studies that Braess paradox is prevalent and can be arbitrarily severe~\cite{steinberg1983prevalence, roughgarden2006severity}. Despite the price of anarchy, the occurence of Braess's paradox heavily depends on network topology and the parameters of link delay functions ~\cite{roughgarden2001designing,hagstrom2001characterizing,milchtaich2003network}. 

\section{Networks with Multiple O/D Pairs}\label{sec:mult}

So far, we have seen that even in a network with only one O/D pair, the introduction of autonomous vehicles can result in complex behaviors. Thus, it should be expected that a general network with multiple O/D pairs will exhibit similar counter intuitive behaviors. In the previous section, we saw that the existence of a homogeneous degree of capacity asymmetry throughout the network is sufficient for guaranteeing improvements in the social delay by increasing the fraction of autonomous vehicles. We now show, via the following example, that this is not the case for networks with multiple O/D pairs.

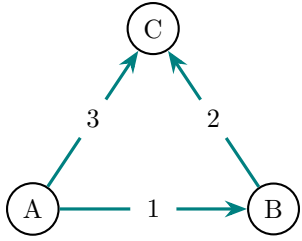
\begin{figure}
\centering
\begin{tikzpicture}[scale=0.8]
\begin{scope}[every node/.style={circle,thick,draw}]
    \node (A) at (0,0) {A};
    \node (B) at (4,0) {B};
    \node (C) at (2,3) {C};
    
\end{scope}

\begin{scope}[>={Stealth[teal]},
              every node/.style={fill=white,circle},
              every edge/.style={draw=teal,very thick}]
    \path [->] (A) edge node {$1$} (B);
    \path [->] (A) edge node {$3$} (C);
    \path [->] (B) edge node {$2$} (C);
\end{scope}

\end{tikzpicture}
\caption{A network with three O/D pairs.} 
\label{fig:net_equ}
\end{figure}

\begin{example}
Consider the network shown in Figure~\ref{fig:net_equ} which was first introduced in~\cite{fisk1979more}. There are three O/D pairs, $W = \{ \text{(A,B), (B,C), (A,C)}\}$. The total demand of the network O/D pairs are $r_{\text{AB}} = 17, r_{\text{AC}} = 20$, and $r_{\text{BC}} = 90 $. Assume that $\gamma_l=1,\,\beta_l=1$, for all links $l \in L$. Let the link parameters be $\{a_1 = 0, m_1 = 1, , M_1 = 4\}$, $\{a_2 = 0, m_2= 1\}$, and $\{a_3 = 90, m_3 = 1 \}$. Let the vehicles that travel from A to C, and from B to C be all regular vehicles, i.e. $\alpha_{\text{AC}} = \alpha_{\text{BC}} =0$. Figure~\ref{fig:triangular_net} shows a plot of the network social delay versus the fraction of autonomous vehicles traveling along O/D pair AB, $\alpha_{\text{AB}}$. As the figure shows, as vehicle autonomy increases, so does the social delay. Note that the social delay is unique in this case. This example shows that existence of vehicle autonomy along certain network O/D pairs can result in worsening the overall or social delay of the network even if the road degrees of capacity asymmetry are homogeneous. This is of paramount importance in practice. For instance, if O/D pair AB belongs to a high--income neighborhood, autonomous vehicles may first be deployed along this path, while other neighborhood or O/D pairs may still travel via regular vehicles. Then, although the early adoption of autonomous vehicles along O/D pair AB will lead to a decrease in travel delay of O/D pair AB, it worsens the social delay in the network and increases the delays experienced by users along other O/D pairs. This example shows that even with homogeneous degrees of capacity asymmetry, when there exist multiple O/D pairs, different autonomy fractions along network O/D pairs can be another source of heterogeneity in the network; hence, counterintuitive behaviors might occur for networks with mixed autonomy. 
\end{example}

It was shown in~\cite{fisk1979more,dafermos1984some} that a decrease in the total demand of a single O/D pair, might lead to an increase in delay of travel along other network O/D pairs and the social delay. In the previous example, we showed that a similar behavior can also be observed due to the presence of autonomous vehicles. In fact, what we have shown so far is that the long known paradoxical traffic behavior resulting from constructing more roads or reducing demands can actually happen in networks with mixed autonomy due to the presence of autonomous vehicles. Thus, the mobility benefits of increasing autonomy in a network are not immediate, and
in order to take advantage of the full mobility potential of autonomous vehicles, control and routing strategies that guarantee mobility benefits must be developed for the next generation of traffic networks.

Now that we have shown, the social delay can increase as a consequence of the presence of autonomous vehicles in networks with multiple O/D pairs, we wish to study whether we can bound this degradation in the network performance, to see how much worse the social delay can  get with increasing the fraction of autonomous vehicles. To answer this, we derive a bound on the performance degradation that can result from all possible demand and autonomy fraction vectors in general networks that have a \emph{homogeneous} degree of capacity asymmetry. To this end, for a given network $G$ and a demand vector $r$, define the vector of fictitious reduced demand $\tilde{r} = (\tilde{r}_w :w \in W)$ to be $\tilde{r}_w = (1-\alpha_w) r_w + \mu \alpha_w r_w $ for each O/D pair $w \in W$. Consider an auxiliary fictitious routing game with a total demand $\tilde{r}$ of only regular vehicles on $G$. For this auxiliary game, similar to Theorem~\ref{theo:homog}, define $(\tilde{e}_l: l \in L)$ to be

\begin{align}\label{eg:aux-delay}
\tilde{e}_l =  a_l \left(1+\gamma_l\left(\frac{\tilde{f}_l}{m_l} \right)^{\beta_l}\right),
\end{align}
and let $\tilde{e}_w (\tilde{r})$ be the delay of travel for each $w \in W$ in this auxiliary game. Then, using the auxiliary fictitious game, we can state the following proposition.


\begin{figure}
\centering
 \begin{tikzpicture}[scale=0.8]
\begin{axis}
[
no markers,
axis x line=bottom,
axis y line=left,
xmax=1, ymax = 10830,
xmin=0, ymin=10670,
xlabel={$\alpha_{AB}$},
ylabel={$J$},
]


\addplot[MidnightBlue,line width=2pt] coordinates {
(0., 10676.) (0.1, 10691.3) (0.2, 10706.6) (0.3, 10721.9) (0.4, 10737.2) (0.5, 10752.5) (0.6, 10767.8) (0.7, 10783.1) (0.8, 10798.4) (0.9, 10813.7) (1., 10829.)
};
\end{axis}
\end{tikzpicture}
\caption{Social delay in Example 5 for different fraction of autonomous vehicles traveling along O/D pair $AB$ when vehicles along all other O/D pairs are regular.}
\label{fig:triangular_net}
\end{figure}
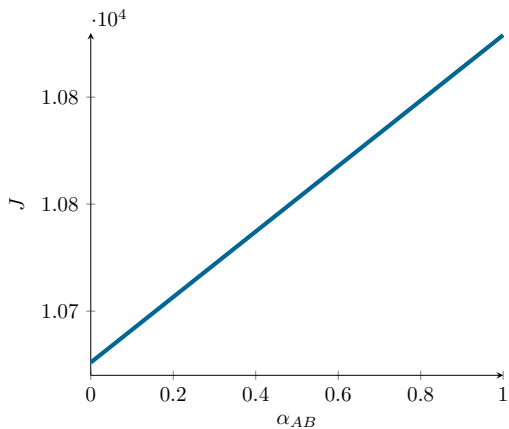

\begin{proposition}\label{prop:social_delay_aux}
Consider a general network $G = (N,L,W)$ with a homogeneous degree of capacity asymmetry $\mu \leq 1$ in all of its links. For any demand vector $r$, fix the vector of autonomy fraction $\alpha = (\alpha_w:w \in W)$ such that $0 \leq \alpha_w \leq 1$ for all $w \in W$. Then, we have
\begin{enumerate}
\item The social delay $J(f)$ is unique for all Wardrop equilibrium flow vectors $f$.
\item The social delay of the original game is $J(f) = \sum_{w \in W} r_w \tilde{e}_w(\tilde{r}_w)$  for all Wardrop equilibrium flow vectors $f$. 
\end{enumerate}
\end{proposition}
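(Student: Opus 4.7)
The plan is to lift any mixed-autonomy Wardrop equilibrium $f$ of the original game to an equilibrium of the single-class auxiliary game on $G$ with demand $\tilde{r}$, exactly as in Theorem~\ref{theo:homog}, and then conclude by invoking the weak uniqueness result of Proposition~\ref{prop:weak_uniq} (together with Remark~\ref{rem:sing_type}) commodity by commodity. Concretely, given an equilibrium $f = (f_p^r, f_p^a : p \in \PP)$, define $\tilde{f}_p := f_p^r + \mu f_p^a$ for every $p \in \PP$. I would first verify feasibility: for each $w \in W$, $\sum_{p \in \PP_w} \tilde{f}_p = (1-\alpha_w) r_w + \mu \alpha_w r_w = \tilde{r}_w$, with $\tilde{f}_p \geq 0$.

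Next, I would establish the link-level identity $\tilde{e}_l(\tilde{f}_l) = e_l(f_l^r, f_l^a)$. This is a direct algebraic check using $\mu_l = \mu = m_l/M_l$ for every $l \in L$: the inner term $(f_l^r + \mu f_l^a)/m_l$ in $\tilde{e}_l$ equals $f_l^r/m_l + f_l^a/M_l$, which is the inner term in \eqref{eq:latency_simplified}. Summing over links in each path gives $\tilde{e}_p(\tilde{f}) = e_p(f)$ for all $p \in \PP$. Then, for each O/D pair $w \in W$ and any $p, p' \in \PP_w$, I would take the two Wardrop inequalities from \eqref{eq:equilib_conds}, namely $f_p^r(e_p(f) - e_{p'}(f)) \leq 0$ and $f_p^a(e_p(f) - e_{p'}(f)) \leq 0$, multiply the second by $\mu \geq 0$, and add them to obtain $\tilde{f}_p(\tilde{e}_p(\tilde{f}) - \tilde{e}_{p'}(\tilde{f})) \leq 0$. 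Since this is precisely \eqref{eq:equilib_cond_single_com} applied per commodity, $\tilde{f}$ is a Wardrop equilibrium of the single-class auxiliary game with demand $\tilde{r}$.

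Now I would apply Proposition~\ref{prop:weak_uniq} to the auxiliary game: the travel delay $\tilde{e}_w(\tilde{r})$ for each $w \in W$ depends only on $\tilde{r}$ and not on the choice of auxiliary equilibrium. Combined with the path-level identity $e_p(f) = \tilde{e}_p(\tilde{f})$, this forces the equilibrium travel delay $e_w(f) = \min_{p \in \PP_w} e_p(f)$ in the original game to equal $\tilde{e}_w(\tilde{r})$ for every original equilibrium $f$. Finally, since at equilibrium every used path in $\PP_w$ carries delay exactly $e_w(f)$ and flow-conservation gives $\sum_{p \in \PP_w} f_p = r_w$, the social delay decomposes as $J(f) = \sum_{w \in W} \sum_{p \in \PP_w} f_p e_p(f) = \sum_{w \in W} r_w \tilde{e}_w(\tilde{r})$, which is independent of the particular $f$. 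This yields both claims simultaneously.

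The main obstacle, and the only place where the argument can fail, is the link-level collapse $\tilde{e}_l(\tilde{f}_l) = e_l(f_l^r, f_l^a)$: this identity requires a \emph{common} scalar $\mu$ so that the reparametrization $f_l^r + \mu f_l^a$ has the same meaning on every link and can therefore be aggregated coherently into a single path-level quantity $\tilde{f}_p$. Under heterogeneous $\mu_l$, one cannot construct a single-class surrogate flow whose link totals coincide with the mixed-autonomy delays, which is consistent with the non-uniqueness and paradoxes exhibited in Examples~\ref{ex:non_unique_delay}--\ref{ex:counter_intuitive}. Once homogeneity is used, the remainder of the proof is essentially bookkeeping across commodities combined with a per-O/D-pair application of Proposition~\ref{prop:weak_uniq}.
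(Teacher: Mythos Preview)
Your proposal is correct and follows essentially the same approach as the paper: define the single-class surrogate $\tilde{f}_p = f_p^r + \mu f_p^a$, verify it is a Wardrop equilibrium of the auxiliary game with demand $\tilde{r}$ via the same link-level identity $\tilde{e}_l(\tilde{f}_l) = e_l(f_l^r,f_l^a)$ and the same linear combination of the two Wardrop inequalities, and then invoke Proposition~\ref{prop:weak_uniq} to conclude that $e_w(f) = \tilde{e}_w(\tilde{r})$ and hence $J(f) = \sum_{w} r_w \tilde{e}_w(\tilde{r})$ is independent of $f$. The paper's proof is slightly terser in that it refers back to Theorem~\ref{theo:homog} for the lifting step rather than redoing the algebra, but the content is identical.
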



\begin{proof}
Fix $r$ and $\alpha$, such that for each $w \in W$, $0 < r_w$ and $0 \leq \alpha_w \leq 1$. Recalling Lemma~\ref{remark:exist}, we know that there exists at least one equilibrium. Let $f = (f_p^r, f_p^a:p \in \PP)$ be such an equilibrium flow vector for $G$. For each path $p \in \PP$, define $\tilde{f}_p := f_p^r + \mu f_p^a$. 
By generalizing the proof of Theorem~\ref{theo:homog}, it is easy to see that $\tilde{f} = (\tilde{f}_p :p \in \PP)$
 is an equilibrium for the defined auxiliary routing game on $G$ with reduced demand $\tilde{r}$ of only regular vehicles.
Moreover, for each path $p \in \PP$, $e_p(f) = \tilde{e}_p(\tilde{f})$. Therefore, for each O/D pair $w \in W$, $\tilde{e}_w(\tilde{f}) = \min_{p \in \PP_w} \tilde{e}_p(\tilde{f}) = \min_{p \in \PP_w} {e}_p({f}) = e_w(f)$. Hence,

\begin{align}\label{eq:equ_soci_delay}
J(f) = \sum_{w\in W} r_w e_w(f) = \sum_{w\in W} r_w \tilde{e}_w(\tilde{f}).
\end{align}

Since $\tilde{f}$ contains only regular vehicles, recalling Remark~\ref{rem:sing_type} and Proposition~\ref{prop:weak_uniq},  for each $w \in W$, the delay of travel $\tilde{e}_w(\tilde{f})$ is unique for a given $\tilde{r}$; thus, 

\begin{align}\label{eq:social_delay_final_reduced}
J(f) = \sum_{w\in W} r_w \tilde{e}_w(\tilde{r}).
\end{align}

As $\tilde{r}$ is uniquely determined for a given demand vector $r$ and a vector of autonomy fraction $\alpha$, the social delay $J(f)$ is unique for all Wardrop equilibrium flow vectors $f$ and can be obtained via~\eqref{eq:social_delay_final_reduced}.
\end{proof}

The uniqueness of social delay established by Proposition~\ref{prop:social_delay_aux} implies that for a fixed demand vector $r$, the social delay is a well defined function of autonomy fraction $\alpha$. With a slight abuse of notation, we use $J(\alpha)$ to emphasize the dependence of the social delay on the vector of autonomy fraction $\alpha$. 
Note that Proposition~\ref{prop:social_delay_aux} establishes a connection between our original routing game, which has two classes of vehicles, with a fictitious auxiliary routing game, which has only regular vehicles and a reduced demand vector $\tilde{r}$. We exploit this connection in the remainder of the paper. Since the auxiliary game has only one class of vehicles, the results in~\cite{correa2008geometric} hold for this game. Before proceeding, we need to adopt and review some of the definitions in~\cite{correa2008geometric} for our proposed auxiliary game. 

In the auxiliary game, for a given O/D demand vector $\tilde{r}$, a flow vector $\tilde{f}$ is feasible if $\tilde{f}_p \geq 0$ for all paths $p \in \PP$, and $\sum_{p\in\PP_w}\tilde{f}_p = \tilde{r}_w$ for all $w \in W$. Let $\phi \in \mathbb{R}^{|L|}$ be a vector of \emph{link} flows that result from a feasible flow vector $\tilde{f}$, where $|L|$ is the number of links in the network. 
Also, let $\Phi$ represent the set of all feasible link flow vectors $\phi$ for a given reduced demand vector $\tilde{r}$. Then, for a vector of link delay functions $(\tilde{e}_l :l\in L)$ of the form~\eqref{eg:aux-delay} and any vector $v \in \Phi$, define

\begin{align}\label{eq:lambda_def}
\lambda \big( (\tilde{e}_l:l\in L), v \big) := \max_{x \in \mathbb{R}^{|L|}_{\geq 0}} \frac{\sum_{l \in L} \big(\tilde{e}_l(v_l) - \tilde{e}_l(x_l)\big) x_l}{\sum_{l \in L}\tilde{e}_l(v_l)v_l},
\end{align}
\noindent where $0/0$ is considered to be $0$. Additionally, let $\mathcal{E}$ be the class of delay functions represented by~\eqref{eg:aux-delay}. Define
\begin{align}\label{eq:lambda_def_final}
\lambda(\mathcal{E}) := \sup_{ (\tilde{e}_l:l\in L) \in \mathcal{E}, v \in \Phi} \lambda \left( (\tilde{e}_l:l\in L), v \right).
\end{align}
It is important to mention that since the class of delay functions $\mathcal{E}$
is monotone, $\lambda(\mathcal{E})\leq 1$ in our setting (See Section 4 in~\cite{correa2008geometric}). 
Note that $\lambda(\mathcal{E})$ can be easily computed for certain classes of delay functions such  as polynomials. For instance, $\lambda(\mathcal{E}) = \frac{1}{4}$ for the class of linear delay functions.

Now, we can bound the network performance degradation due to the introduction of autonomy in homogeneous networks via the following theorem.

\begin{theorem}\label{theo:bound}
Consider a general network $G = (N,L,W)$ with a homogeneous degree of capacity
asymmetry $\mu$. Fix the demand vector $r$. Let $J^o$ be the social delay when
all vehicles are nonautonomous, i.e. $\alpha_w = 0$ for all $w \in W$. Then, for any other vector of autonomy fraction $\alpha$ such that $0 \leq \alpha_w \leq 1$ for all $w \in W$, we have
\begin{align}
J(\alpha) \leq (1-\lambda(\mathcal{E}))^{-1} J^o,
\end{align}
where $J(\alpha)$ is the social delay for the vector of autonomy fraction $\alpha$. Note that using Proposition~\ref{prop:social_delay_aux}, $J(\alpha)$ and $J^o$ are unique, and; thus, well defined.   
\end{theorem}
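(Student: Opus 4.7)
The plan is to leverage Proposition~\ref{prop:social_delay_aux} to transport the problem into the fictitious single-class auxiliary game on $G$, where only regular vehicles travel with reduced demand $\tilde{r}_w = (1-\alpha_w + \mu\alpha_w)r_w \leq r_w$, and then use a variational inequality argument in the spirit of Correa--Schulz--Stier-Moses. Concretely, let $\tilde{f}$ denote the auxiliary equilibrium constructed from a mixed-autonomy equilibrium $f$ as in the proof of Theorem~\ref{theo:homog}, and let $f^o$ denote the equilibrium of the single-class routing game on $G$ with full demand $r$ and link delays $\tilde{e}_l$; by definition $J^o = \sum_{l\in L} \tilde{e}_l(f^o_l)\, f^o_l$.

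The first step is to bound $J(\alpha)$ by an inner-product-style quantity. By Proposition~\ref{prop:social_delay_aux}, $J(\alpha) = \sum_{w \in W} r_w\, \tilde{e}_w(\tilde{r})$. Since in the auxiliary game $\tilde{e}_w(\tilde{r}) = \min_{p \in \PP_w}\tilde{e}_p(\tilde{f})$, for any feasible path flow $x$ with demand $r$ (in particular, the path flow underlying $f^o$) one can write
\begin{equation*}
J(\alpha) \;=\; \sum_{w \in W}\Bigl(\sum_{p \in \PP_w} x_p\Bigr)\tilde{e}_w(\tilde{r}) \;\leq\; \sum_{p \in \PP} x_p\, \tilde{e}_p(\tilde{f}) \;=\; \sum_{l \in L} f^o_l\, \tilde{e}_l(\tilde{f}_l).
\end{equation*}
This is where the equilibrium property of $\tilde{f}$ in the auxiliary game carries all the weight.

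The second step is to split the right-hand side into a $J^o$ term and a remainder, and to bound the remainder via $\lambda(\mathcal{E})$. Writing
\begin{equation*}
\sum_{l \in L} f^o_l\, \tilde{e}_l(\tilde{f}_l) \;=\; \sum_{l \in L} f^o_l\, \tilde{e}_l(f^o_l) \;+\; \sum_{l \in L} \bigl(\tilde{e}_l(\tilde{f}_l) - \tilde{e}_l(f^o_l)\bigr) f^o_l \;=\; J^o + \sum_{l \in L} \bigl(\tilde{e}_l(\tilde{f}_l) - \tilde{e}_l(f^o_l)\bigr) f^o_l,
\end{equation*}
and invoking the definitions~\eqref{eq:lambda_def}--\eqref{eq:lambda_def_final} with $v = \tilde{f} \in \Phi$ and nonnegative vector $x = f^o$, the remainder is bounded by $\lambda(\mathcal{E}) \sum_{l} \tilde{e}_l(\tilde{f}_l)\tilde{f}_l = \lambda(\mathcal{E})\,\tilde{J}(\tilde{r})$.

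The last step closes the loop. Because $\tilde{r}_w \leq r_w$ for every $w \in W$, one has $\tilde{J}(\tilde{r}) = \sum_w \tilde{r}_w\tilde{e}_w(\tilde{r}) \leq \sum_w r_w\tilde{e}_w(\tilde{r}) = J(\alpha)$, so combining the last two displays yields $J(\alpha) \leq J^o + \lambda(\mathcal{E})\, J(\alpha)$, and since $\lambda(\mathcal{E})\leq 1$ rearranging gives the claimed bound. The main obstacle I expect is step one: one must justify that the equilibrium condition of the auxiliary game can be applied against the arbitrary feasible flow $f^o$ whose demand $r$ differs from the auxiliary demand $\tilde{r}$. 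The workaround of splitting $r_w = \sum_{p \in \PP_w} x_p$ and using that $\tilde{e}_w(\tilde{r})$ lower-bounds $\tilde{e}_p(\tilde{f})$ on every path is what sidesteps this demand mismatch cleanly.
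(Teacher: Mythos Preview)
Your proposal is correct and follows essentially the same route as the paper: both reduce to the auxiliary single-class game via Proposition~\ref{prop:social_delay_aux}, establish $J(\alpha) \leq \sum_{l} f^o_l\,\tilde{e}_l(\tilde{f}_l)$ by using that $\tilde{e}_w(\tilde{r})$ lower-bounds every path delay in the auxiliary equilibrium, apply the Correa--Schulz--Stier-Moses inequality (the paper cites it as Lemma~4.1 of~\cite{correa2008geometric}, you invoke the definition of $\lambda(\mathcal{E})$ directly, which is the same thing), and close the loop via $\tilde{r}_w \leq r_w$. The only cosmetic slip is that rearranging requires $\lambda(\mathcal{E}) < 1$ rather than $\leq 1$, which the paper also assumes at that step.
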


\begin{proof}
Fix the demand vector $r$. Let $f^o = (f^o_p: p \in \PP)$ be an equilibrium flow vector when all vehicles are regular. 
We further use $f_l^o$ to denote the flow along link $l\in L$ in this
case. Note that using Proposition~\ref{prop:weak_uniq} , we know that $f_l^o$ is unique for every link $l \in L$. Moreover, for each path $p \in P$, we use $e_p^o$ to represent the delay
along path $p$ when all vehicles are regular. Using Remark~\ref{rem:sing_type}
and Proposition~\ref{prop:weak_uniq}, in the absence of autonomy, the delay of
travel for each O/D pair $w \in W$ is unique. Thus, in this case, the unique
social delay $J^o = \sum_{w\in W}r_w e^o_w(r)$ , where $e_w^o(r)$ is the delay of travel along $w \in W$ when all vehicles are regular.

On the other hand, when there are autonomous vehicles with a given autonomy
fraction $\alpha$ in the network, as defined in
Proposition~\ref{prop:social_delay_aux}, construct the auxiliary game on $G$
with fictitious reduced demand $\tilde{r}=(\tilde{r}_w : w \in W)$ of only
regular vehicles, where $\tilde{r}_w = (1-\alpha_w)r_w + \mu r_w \alpha_w$ for every $w \in W$. Let
$\tilde{f}=(\tilde{f}_p:p\in \PP)$ be an equilibrium flow vector for this
auxiliary game. Using Proposition~\ref{prop:social_delay_aux}, the social delay
of the network with autonomous vehicles is given by $J(\alpha) = \sum_{w \in W}
r_w \tilde{e}_w(\tilde{r})$. First, we claim that 
\begin{align}\label{eq:first_ineq_req}
J(\alpha) = \sum_{w \in W} r_w \tilde{e}_w({\tilde{r}}) 
\leq \sum_{l \in L} f_l^o \tilde{e}_l(\tilde{r}). 
\end{align}
To see this, note that for every link $l \in L$, we have $f_l^o = \sum_{p \in \PP: l\in p} f^o_p$. Furthermore, the origin and destination of each path $p \in \PP$ are unique. Hence, each path $p$ belongs to one and exactly one O/D pair $w \in W$. Consequently, $f_l^o = \sum_{w \in W}  \sum_{p\in\PP_w:l\in p} f^o_p$, and we have 
\begin{align*}
\sum_{l\in L} f_l^o \tilde{e}_l(\tilde{r}) &= \sum_{l\in L} \left( \sum_{w \in W}  \sum_{p\in\PP_w:l\in p} f^o_p \right) \tilde{e}_l(\tilde{r}) \\
&= \sum_{w \in W} \sum_{l\in L}\left( \sum_{p\in\PP_w:l\in p} f^o_p \right) \tilde{e}_l(\tilde{r}) \\
&= \sum_{w \in W} \sum_{p\in\PP_w} f_p^o \sum_{l: l\in p}\tilde{e}_l(\tilde{r}) \\
&= \sum_{w \in W} \sum_{p\in\PP_w} f_p^o \tilde{e}_p(\tilde{r}), 
\end{align*}
where $\tilde{e}_p(\tilde{r})$ is the delay of path $p \in \PP_w$ for the auxiliary game. Recalling Definition~\eqref{eq:delay_travel}, for the auxiliary game, the travel delay of an O/D pair $w \in W$ is given by $\tilde{e}_w(\tilde{r}) = \min_{p \in \PP_w} \tilde{e}_p(\tilde{r})$; thus, we have
\begin{align*}
 \sum_{w \in W} \sum_{p\in\PP_w} f_p^o \tilde{e}_p(\tilde{r}) &\geq   \sum_{w \in W} \sum_{p\in\PP_w} f_p^o \tilde{e}_w(\tilde{r})  \\
&=  \sum_{w \in W} \tilde{e}_w(\tilde{r})\sum_{p\in\PP_w} f^o_p \\
&= \sum_{w \in W} r_w \tilde{e}_w(\tilde{r}),
\end{align*}
\noindent which proves our claim in \eqref{eq:first_ineq_req}. Now,
since the auxiliary game has only one class of vehicles, we can use Lemma 4.1
from~\cite{correa2008geometric}. More precisely, since
$\tilde{f}$ is an equilibrium for the auxiliary game, then Lemma 4.1
from~\cite{correa2008geometric} states that for every nonnegative vector of link
flows $x \in \mathbb{R}^{|L|}_{\geq 0}$ ($x$ is not necessarily a feasible link flow vector), we have
\begin{align}\label{eq:lemma_orig}
\sum_{l\in L} x_l \tilde{e}_l(\tilde{f}_l)  \leq \sum_{l\in L}x_l \tilde{e}_l(x_l)  + \lambda(\mathcal{E}) \sum_{l\in L} \tilde{f}_l \tilde{e}_l(\tilde{f}_l). 
\end{align}
Since $f_l^o$ is nonnegative for every link $l \in L$, substituting $x_l$ by $f_l^o$ in~\eqref{eq:lemma_orig}, we get 
\begin{align} \label{eq:lemma}
\sum_{l\in L} f_l^o \tilde{e}_l(\tilde{f}_l) \leq \sum_{l\in L} f^o_l \tilde{e}_l(f^o_l) + \lambda(\mathcal{E}) \sum_{l\in L} \tilde{f}_l \tilde{e}_l(\tilde{r}).
\end{align}
Now, note that since both the auxiliary game and the game with no autonomy have only regular vehicles, utilizing~\eqref{eg:aux-delay}, we realize that
\begin{align*}
\tilde{e}_l (f_l^o) &= a_l \left(1 + \gamma_l \left(\frac{f_l^o}{m_l}\right)^{\beta_l}\right) \\
&= e_l^o (f_l^o).
\end{align*}
\noindent Thus, 
\begin{align}\label{eq:1}
\sum_{l\in L} f^o_l \tilde{e}_l(f^o_l)=\sum_{l\in L} f^o_l e^o_l(f^o_l) = J^o.
\end{align}
Now, since $J(\alpha) = \sum_{w \in W} r_w \tilde{e}_w({\tilde{r}})$ and for all links $l \in L$, $\tilde{e}_l(\tilde{r}) = \tilde{e}_l(\tilde{f}_l)$ by definition, using~\eqref{eq:first_ineq_req},~\eqref{eq:lemma}, and~\eqref{eq:1}, we realize that
\begin{align}\label{eq:reducd_lemma}
J(\alpha) \leq J^o + \lambda(\mathcal{E}) \sum_{l\in L} \tilde{f}_l \tilde{e}_l(\tilde{r}).
\end{align}
As $\tilde{f}$ is an equilibrium for the auxiliary routing game,
$\sum_{l \in L}\tilde{f}_l \tilde{e}_l (\tilde{r})= \sum_{w \in W} \tilde{r}_w
\tilde{e}_w (\tilde{r})$. Since for each O/D pair $w \in W$, $\alpha_w \leq 1$,
we have $\tilde{r}_w \leq r_w$. Therefore, using Proposition~\ref{prop:social_delay_aux},
\begin{align}\label{eq:auxi_eq}
\sum_{w\in W} \tilde{r}_w \tilde{e}_w (\tilde{r}) \leq \sum_{w\in W} r_w \tilde{e}_w (\tilde{r})=J(\alpha).
\end{align}
Using~\eqref{eq:auxi_eq} and~\eqref{eq:reducd_lemma}, we get
\begin{align}
J(\alpha) \leq J^o + \lambda(\mathcal{E}) J(\alpha).
\end{align}
Hence, for the our monotone class of delay functions $\mathcal{E}$ with $\lambda(\mathcal{E})< 1$, we can conclude that
\begin{align*}
J(\alpha) \leq (1-\lambda(\mathcal{E}))^{-1} J^o,
\end{align*}
which completes the proof.
\end{proof}

Theorem~\ref{theo:bound} provides an upper bound on the severity of increases in traffic delays when a fraction of regular vehicles is replaced by autonomous vehicles. 

We now postulate, as an analogous concept to the price of anarchy~\cite{roughgarden2002bad}, the price of vehicle autonomy in homogeneous networks under every demand vector $r$ as follows:


\begin{align}\label{eq:autno_price}
\eta := \max_{\alpha:\; 0 \leq \alpha_w \leq 1, \, \forall w} \frac{J(\alpha)}{J^o},
\end{align}
 
 Theorem~\ref{theo:bound} states that $\eta \leq (1-\lambda(\mathcal{E}))^{-1}$.
For polynomial delay functions of degree less than or equal to 4, $(1-\lambda(\mathcal{E}))^{-1} = 2.151$~\cite{correa2008geometric}. 
 Interestingly, the bound that we have derived for the price of vehicle autonomy is similar to the bounds derived for the price of anarchy of routing games with a single class of users in~\cite{roughgarden2002bad,correa2008geometric}. Note that this bound for $\eta$ is different from the price of anarchy of routing games with mixed autonomy~\cite{lazar2017price}, it is similar to that of routing games with only a single class of vehicles. However, unlike the bounds for price of anarchy, the tightness of our bound for $\eta$ must be further investigated.




\section{Conclusion and Future Work} \label{sec:future}

In this paper, we studied how the coexistence of autonomous and regular vehicles in traffic networks will affect network mobility when all vehicles select their routes selfishly. We compared the total social network delay at a Wardrop equilibrium in networks with mixed autonomy with that of the networks with only regular vehicles. Having shown that the equilibrium is not unique in the mixed--autonomy setting, we proved that the total social delay is unique when the road degree of capacity asymmetry, which is the ratio between the roadway capacity with only regular vehicles and the roadway capacity with only autonomous vehicles, is homogeneous among its roadway. We further proved that the total social delay is a nonincreasing and continuous function of the fraction of autonomous vehicles on the roadways (aka the autonomy ratio $\alpha$) when the network has only one O/D pair. However, we showed that allowing for heterogeneous degrees of capacity asymmetry or multiple O/D pairs in the network results in counter intuitive behaviors such as the fact that increasing network autonomy ratio can worsen the network total social delay. Finally, we derived an upper bound for the ``price of vehicles autonomy" in networks with a homogeneous degree of capacity asymmetry, which estimates the worst possible increase in network social delay, due to the introduction of autonomous vehicles.

We believe that the results presented in this paper indicate that the expected mobility benefits resulting from wide spread utilization of autonomous vehicles in traffic networks are not immediate. Thus, in order to take advantage of the potential mobility benefits of autonomy, it will be necessary to study the stability of traffic equilibria in networks with mixed autonomy. Once the stable system equilibria are characterized, traffic management and control strategies must be developed for the traffic network that are guaranteed to steer the system to the equilibria that have lower total delay. Therefore, revisiting routing and tolling strategies for networks with mixed vehicle autonomy is essential.
\section*{Acknowledgment}

This work is supported by the National Science Foundation under Grant CPS 1545116.

\ifarxiv
\relax
\else
\ifCLASSOPTIONcaptionsoff
  \newpage
\fi\fi



%

\newcommand{\etalchar}[1]{$^{#1}$}

\ifarxiv
\bibliographystyle{alpha}
\else
\bibliographystyle{IEEEtran}
\fi



%


\ifarxiv
\else
\vfill
\begin{IEEEbiography}[{\includegraphics[width=1in,height=1.25in,clip,keepaspectratio]{negar_cropped.jpg}}]
{Negar Mehr}
received the B.S. degree in mechanical engineering from Sharif University of Technology in 2013. Since 2013, she has been a Graduate Student Researcher with the Department
of Mechanical Engineering, University of California, Berkeley. She is also with the California Partners for Advanced Transportation Technology
(PATH) Program. Her research interests are in controls, cyberphysical systems and transportation engineering. She is currently a PhD candidate at the University of California, Berkeley.
\end{IEEEbiography}

\begin{IEEEbiography}[{\includegraphics[width=1in,height=1.25in,keepaspectratio,clip]{horowitz_cropped.jpg}}]{Roberto Horowitz}(SM89) received the B.S. degree
(Hons.) and the Ph.D. degree in mechanical engineering from University of California, Berkeley, CA, USA, in 1978 and 1983, respectively.
In 1982, he joined the Department of Mechanical Engineering, University of California, where he
is currently the Department Chair and the James
Fife Endowed Chair. He is also with the California
Partners for Advanced Transportation Technology
(PATH) program. His research interests include the
areas of adaptive, learning, nonlinear, and optimal
control, with applications to microelectromechanical systems, computer disk
file systems, robotics, mechatronics, and intelligent vehicle and highway systems. Dr. Horowitz is a Fellow of the American Society of Mechanical Engineers.
\end{IEEEbiography}
\fi







\end{document}